\documentclass[submission,copyright,creativecommons]{eptcs}
\providecommand{\event}{AFL 2026}

\usepackage{iftex}
\usepackage{amsmath,amssymb}
\usepackage{amsthm}
\usepackage{booktabs}
\usepackage{array}
\usepackage{url}
\usepackage{algorithm}
\usepackage{algpseudocode}

\ifpdf
  \usepackage{underscore}
  \usepackage[T1]{fontenc}
\else
  \usepackage{breakurl}
\fi

\title{A Non-CDCL SAT Solver with Early Conflict Detection: The Watched-Literal-Based CSFLOC Solver}
\author{G\'abor Kusper
\institute{Eszterh\'azy K\'aroly Catholic University\\ Eger, Hungary}
\email{kusper.gabor@uni-eszterhazy.hu}
}
\def\titlerunning{The Watched-Literal-Based CSFLOC Solver}
\def\authorrunning{G. Kusper}

\newtheorem{definition}{Definition}
\newtheorem{lemma}{Lemma}
\newtheorem{theorem}{Theorem}
\newtheorem{corollary}{Corollary}

\newcommand{\FLC}{\mathsf{FLC}}
\newcommand{\last}{\mathsf{last}}
\newcommand{\lastpos}{\mathsf{lastpos}}
\newcommand{\Res}{\mathsf{Res}}
\newcommand{\TO}{\mathrm{TO}}

\begin{document}
\maketitle

\begin{abstract}
CSFLOC is a non-CDCL SAT decision procedure based on counting subsumed full-length ordered clauses.
The classical CSFLOC loop traverses the ordered space of full-length clauses by a monotone counter: if the current full-length clause is not subsumed by the input formula, its negation is a satisfying assignment; otherwise, a subsuming clause determines a counter jump.
The main bottleneck is the repeated search for such a subsuming clause.
This paper presents CSFLOC-WL, and its current implementation CSFLOC-WL3, in which this search is replaced by watched-literal prefix propagation over the negation of the current full-length clause represented by the counter.
The central mechanism is early conflict detection: if propagation under a common prefix derives opposite unit consequences for the same variable, then the two reason clauses are resolved immediately and the resolvent is used as a new counter-jump cause.
The resulting solver is not a CDCL solver: it has no CDCL decision tree, no restart policy, and no first-UIP backjumping loop.
It remains a counter-guided full-length-clause-counting solver, but it imports the watched-literal data structure and reason clauses as engineering tools for discovering jumps.
Experiments on selected UNSAT SATLIB instances compare CSFLOC-WL3 with CSFLOC21TU and CaDiCaL 3.0.0.
The results are mixed: CSFLOC-WL3 is strong on several random 3-SAT instances near the random-3-SAT satisfiability threshold, whereas CSFLOC21TU remains faster on several structured cases, apparently because it contains a more mature cache mechanism that is not yet present in CSFLOC-WL3.
\end{abstract}

\section{Introduction}

Propositional satisfiability (SAT) is a central decision problem of computer science.
Most high performance complete SAT solvers today follow the conflict-driven clause learning (CDCL) paradigm.
CDCL solvers combine Boolean constraint propagation, conflict analysis, learned clauses, restarts, variable-activity heuristics, and highly optimized data structures.
The two-watched-literal technique introduced in Chaff~\cite{Moskewicz2001Chaff} became a standard implementation tool, and compact solvers such as MiniSat~\cite{EenSorensson2004MiniSat} made CDCL solver engineering accessible to a broad research community.
CaDiCaL is a recent representative of this line: it is a CDCL solver with a clean library interface and a documented architecture~\cite{Biere2024CaDiCaL}, and release 3.0.0 provides a modern reference implementation for experimental comparison~\cite{Fleury2025CaDiCaL3}.

This paper follows a different line of SAT solving.
CSFLOC is a non-CDCL solver based on counting subsumed full-length ordered clauses~\cite{KusperBiro2015CCC,KusperBiroIszaly2018CSFLOC,Kusper2020Habil,Kusper2025Engineering}.
For a fixed order of $n$ variables, an \emph{ordered full-length clause} contains exactly one literal over every variable.
A clause $D$ \emph{subsumes} a clause $C$ when $D\subseteq C$, and a CNF formula subsumes $C$ when one of its clauses does.
If a CNF formula fails to subsume some full-length clause $C$, then the negation of $C$ is a satisfying assignment; conversely, if all $2^n$ full-length clauses are subsumed, then the formula is unsatisfiable.
CSFLOC therefore decides satisfiability by traversing the ordered space of full-length clauses in a fixed variable order and by skipping intervals that are already known to be subsumed.

The original CCC algorithm traversed this space naively.
Optimized CCC used the observation that a clause whose last literal has variable index $i$ subsumes a consecutive block of $2^{n-i}$ full-length clauses.
CSFLOC added the last-positive-bit observation and organized clauses by the index and sign of their last literal~\cite{KusperBiroIszaly2018CSFLOC}.
Recent engineering work added affected clauses and length-sensitive learned-cause caches~\cite{Kusper2025Engineering,KusperSubmittedLLMCSFLOC}.
These improvements reduce scanning, but the central operation is still a subsumption test: for the current full-length clause, the solver has to find a clause that is a subset of it.

The solver studied here changes this operation.
At each iteration, the current $n$-bit counter value $b$ encodes a full-length clause $C=\FLC(b)$, and $\alpha=\overline C$ is the assignment that makes every literal of $C$ false.
Instead of directly searching for a clause $D\subseteq C$, CSFLOC-WL assigns the variables according to $\alpha$ in the fixed order and uses watched-literal propagation.
If the largest-index variable occurring in $D$ is $x_i$, then every other literal of $D$ is false after $\alpha$ has been restricted to $x_1,\ldots,x_{i-1}$. Hence $D$ is unit when variable $x_i$ is reached.
Thus the subsumption information used by CSFLOC can be recovered as a unit reason produced by prefix propagation.

The main new mechanism is \emph{early conflict detection}.
Suppose that, under a common prefix before level $i$, propagation has produced both unit consequences $x_i$ and $\bar{x}_i$ with reason clauses $P$ and $N$.
CSFLOC-WL resolves $P$ and $N$ on $x_i$ and uses the resolvent as a new CSFLOC counter-jump cause.
If the resolvent is empty, it is represented as the final counter movement to $2^n$; the abstract algorithm still reports UNSAT only when the counter has reached the end of the full-length-clause space.
If the resolvent is non-empty and its last variable is smaller than $i$, the counter may jump earlier than it would have jumped by waiting until level $i$.

The contribution of the paper is threefold.
First, it formulates CSFLOC subsumption search as watched-literal prefix propagation over the negation of the current full-length clause.
Second, it states and proves the Early Conflict Detection theorem, which turns opposite unit consequences under a common prefix into a sound CSFLOC jump cause.
Third, it reports an experimental comparison between CSFLOC-WL3, the cache-oriented CSFLOC21TU implementation, and CaDiCaL 3.0.0 on selected UNSAT SATLIB benchmarks.
The experiments show that watched-literal propagation and early conflict detection can be effective, but also that CSFLOC21TU's mature cache mechanism is still important on several benchmark families.
The paper also documents two new CSFLOC21TU renaming strategies, $T$ and $U$: $T$ prepares two-SAT-like prefixes, while $U$ tries to create early unit consequences.

It is important to delimit the claim.
CSFLOC-WL3 is not presented as a replacement for CDCL solvers.
It is a step in the development of an alternative solver family whose search space is the ordered space of full-length clauses.
It is not a classical lookahead solver either: although it assigns a prefix and observes propagation consequences, it does not branch on both values of a selected decision variable.
The prefix is fixed by the current counter value.
The purpose of propagation is not to choose a DPLL branch, but to find the next sound jump in the counter traversal.

\section{Related Work}

The standard complete SAT solver architecture is rooted in the DPLL procedure~\cite{Davis1962DPLL} and its conflict-driven extensions.
GRASP introduced systematic conflict analysis and clause learning~\cite{MarquesSilva1999GRASP}.
Chaff demonstrated the practical importance of engineering choices, in particular two-watched literals and the Variable State Independent Decaying Sum (VSIDS) branching heuristic~\cite{Moskewicz2001Chaff}.
MiniSat then showed that a compact CDCL implementation can serve both as a competitive solver and as a research platform~\cite{EenSorensson2004MiniSat}.
CSFLOC-WL borrows the watched-literal data structure from this tradition, but not the CDCL search architecture: the global control object is the CSFLOC counter over full-length clauses.

The full-length-clause-counting line started with the idea of solving SAT by an iterative version of inclusion-exclusion~\cite{KusperBiro2015CCC}.
The 2018 paper introduced CSFLOC as the next generation of full-length clause counting algorithms~\cite{KusperBiroIszaly2018CSFLOC}.
The habilitation dissertation of the author summarizes the above results and gives a detailed proof of the soundness and completeness of CSFLOC~\cite{Kusper2020Habil}.
A recent engineering paper describes CSFLOC as a subsumption-driven clause-counting SAT solver~\cite{Kusper2025Engineering}.
The present paper continues this line, but replaces the main subsumption search by watched-literal prefix propagation.

Earlier CSFLOC implementations used last-variable-indexed clause lists, affected clauses, and later bounded learned-cause caches.
In CSFLOC19 the learned-cause cache kept the most recent three clauses for each last-variable/sign pair.
CSFLOC20 admitted only short learned causes, and CSFLOC21 used a larger length-sensitive cache with short, medium, and long regions~\cite{KusperSubmittedLLMCSFLOC}.
The CSFLOC21TU implementation used in this paper belongs to this cache-oriented line.
The watched-literal implementation is intentionally less mature with
respect to caching because we encountered technical difficulties when
migrating the cache mechanism to the new implementation.
This makes the comparison informative because it separates the effect of prefix propagation and early conflict detection from the effect of previous cache engineering.

The experiments use instances from SATLIB, a benchmark collection for SAT research~\cite{HoosStutzle2000SATLIB,SATLIBBenchmarks}.
SAT\-LIB includes uniform random 3-SAT families such as the uf/uuf instances as well as classical benchmark families such as pigeonhole, Dubois, AIM, pret, bf, and ssa instances.
All instances reported in this paper are UNSAT.
The external reference solver is CaDiCaL 3.0.0.

\section{CSFLOC Preliminaries}
\label{sec:prelim}

This section recalls preliminary results on the CSFLOC SAT solver.
The original CSFLOC algorithm and its soundness and completeness
proof were introduced in earlier
work~\cite{KusperBiroIszaly2018CSFLOC,Kusper2020Habil}.

Let $V=\{x_1,\ldots,x_n\}$ be the variables of the input formula,
ordered by index.
A literal is either a variable $x_i$ or its negation $\bar{x}_i$.
A clause is a disjunction of literals, represented as a set, and a
CNF formula is a conjunction of clauses, represented as a set of
clauses.
A clause is called \emph{ordered} when its literals are written
according to the fixed variable order; this ordering does not change
its logical meaning.
A clause $C$ over $V$ is a \emph{full-length clause} if it contains
exactly one literal over every variable $x_i$.

For a non-empty clause $D$, define
\[
  \last(D)=\max\{i\mid D \text{ contains a literal over } x_i\}.
\]
For a full-length clause $C$, define
\[
  \lastpos(C)=\max(\{i\mid x_i\in C\}\cup\{1\}).
\]

During an iteration, the integer
$\mathit{count}\in\{0,\ldots,2^n-1\}$ is identified with its
$n$-bit binary representation
$b=(b_1,\ldots,b_n)\in\{0,1\}^n$, including leading zeros and with
$b_1$ as the most significant bit.
We call $\mathit{count}$, or equivalently $b$, the current
\emph{counter value}.
The value $2^n$ is the terminal counter value marking the end of the
full-length-clause space.

The bit vector $b$ represents the full-length clause
\[
  \FLC(b)=\{x_i\mid b_i=1\}\cup\{\bar{x}_i\mid b_i=0\}.
\]
The candidate assignment associated with $b$ is the complement
$\alpha(b)=\overline{\FLC(b)}$.
Thus $\ell\in\FLC(b)$ if and only if (iff) $\ell$ is false under
$\alpha(b)$.

For a clause set $S$ and a clause $D$, the notation $S\models D$
means that every truth assignment satisfying all clauses of $S$
also satisfies $D$.

\begin{definition}[Subsumption and jump cause]
A clause $D$ subsumes a clause $C$ iff $D\subseteq C$.
A clause set $S$ subsumes $C$ iff some clause
$E\in S$ subsumes $C$.
Let $S$ be the input clause set and let $b$ be the current counter
value.
A clause $D$ is a \emph{counter-jump cause} for $b$ if
$S\models D$ and $D\subseteq\FLC(b)$.
If $i=\last(D)$, then $D$ justifies the CSFLOC counter jump
determined by level $i$.
\end{definition}

\begin{lemma}[Clear Clause Rule]
\label{lem:clear}
Let $S$ be a clause set and let $C$ be a full-length clause.
Then $S$ subsumes $C$ iff $\overline{C}$ is not a model of $S$.
Equivalently, $S$ does not subsume $C$ iff $\overline{C}$ is a model of $S$.
\end{lemma}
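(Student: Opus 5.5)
The plan is to unfold both sides of the equivalence into statements about single clauses and literals, and compare them clause by clause. The key fact is the one recorded in the preliminaries: for a full-length clause $C$, the assignment $\overline{C}$ makes a literal $\ell$ false iff $\ell\in C$. Because $C$ contains exactly one literal over every variable, $\overline{C}$ is a total assignment on $V$, so every literal over $V$ is either true or false under it. In particular, $\ell$ is true under $\overline{C}$ iff $\ell\notin C$, that is, iff $\overline{\ell}\in C$. I will state this as a preliminary observation and use it in both directions. I assume, as throughout the paper, that all clauses of $S$ are over $V$.

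\emph{Forward direction.} Suppose $S$ subsumes $C$, so there is some $E\in S$ with $E\subseteq C$. Every literal of $E$ then lies in $C$ and is therefore false under $\overline{C}$. Hence $E$ is falsified, and $\overline{C}$ is not a model of $S$. This includes the degenerate case where $E$ is the empty clause, which is false under every assignment.

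\emph{Backward direction.} Suppose $\overline{C}$ is not a model of $S$. Then some clause $E\in S$ is falsified, meaning every literal of $E$ is false under $\overline{C}$. By the observation, every literal of $E$ lies in $C$, so $E\subseteq C$ and $S$ subsumes $C$.

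The second formulation in the lemma is just the contrapositive of the first equivalence, obtained by negating both sides. There is no real obstacle here. The only point needing care is to invoke totality of $\overline{C}$, which rests on full-length-ness: without it, a literal outside $C$ would not automatically be true.
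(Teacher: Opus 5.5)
Your proof is correct and follows the paper's argument essentially step for step: both directions rest on the fact that a literal is false under $\overline{C}$ iff it lies in $C$, and the second formulation is obtained by negating the first. Your explicit appeal to the totality of $\overline{C}$, needed to pass from ``not a model'' to ``some clause of $S$ is falsified,'' is a detail the paper leaves implicit.
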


\begin{proof}
If $S$ subsumes $C$, then some $D\in S$ satisfies $D\subseteq C$.
Every literal of $D$ is false under $\overline C$, so $\overline C$ is not a model of $S$.
Conversely, if $\overline C$ is not a model of $S$, then some $D\in S$ is false under $\overline C$.
Hence every literal of $D$ belongs to $C$, so $D\subseteq C$.
The second equivalence is the negation of the first one.
\end{proof}

\begin{corollary}[Full-length coverage]
\label{cor:coverage}
A clause set $S$ over $n$ variables is satisfiable iff it subsumes fewer than $2^n$ full-length clauses.
It is unsatisfiable iff it subsumes all $2^n$ full-length clauses.
\end{corollary}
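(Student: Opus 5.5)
The plan is to derive the corollary from Lemma~\ref{lem:clear} together with the bijection between truth assignments and full-length clauses. There are exactly $2^n$ full-length clauses over $V$, namely the clauses $\FLC(b)$ for $b\in\{0,1\}^n$. Complementation $C\mapsto\overline{C}$ is a bijection from full-length clauses onto total truth assignments of $x_1,\ldots,x_n$. Indeed, a full-length clause picks exactly one literal per variable, so its complement fixes exactly one value per variable, and the map is its own inverse. I will state this first, since every model of $S$ over $V$ is such a total assignment.

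Next I will prove the second claim. Suppose $S$ subsumes all $2^n$ full-length clauses. Take any total assignment $\beta$ and write it as $\beta=\overline{C}$ with $C$ full-length. Because $S$ subsumes $C$, Lemma~\ref{lem:clear} says that $\beta$ is not a model of $S$. Hence $S$ has no model, so it is unsatisfiable.

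Conversely, suppose $S$ is unsatisfiable, and let $C$ be any full-length clause. Then $\overline{C}$ is not a model of $S$, so Lemma~\ref{lem:clear} gives that $S$ subsumes $C$. Thus all $2^n$ full-length clauses are subsumed.

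The first claim then follows by negating both sides. The number of subsumed full-length clauses is at most $2^n$, so "not all subsumed" is the same as "fewer than $2^n$ subsumed". Equivalently, one can argue directly: $S$ is satisfiable iff some $\overline{C}$ is a model, iff some full-length $C$ is not subsumed.

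The only point that needs care is the convention that satisfiability is over assignments to the variables of $V$, so that every model is a total assignment of exactly these $n$ variables. A partial model extends to a total one, and variables outside $V$ do not occur in $S$, so this causes no loss. There is also the degenerate case where $S$ contains the empty clause. The empty clause subsumes every full-length clause and the formula is unsatisfiable, which is consistent with the lemma, so no separate treatment is needed.
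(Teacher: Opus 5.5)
Your proof is correct and follows the paper's route. The paper states this as an immediate corollary of the Clear Clause Rule (Lemma~\ref{lem:clear}) with no separate proof, and your argument spells out the step that derivation relies on: the bijection $C\mapsto\overline{C}$ between the $2^n$ full-length clauses and the total assignments.
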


\begin{lemma}[Consecutive block property]
\label{lem:block}
Let $D$ be a non-empty clause and let $i=\last(D)$.
If $D\subseteq\FLC(b)$, then every full-length clause with the same first $i$ literals as $\FLC(b)$ is also subsumed by $D$.
Consequently, if the suffix bits $b_{i+1},\ldots,b_n$ are all zero, then $D$ subsumes the current full-length clause and the next $2^{n-i}-1$ full-length clauses in the counter order.
\end{lemma}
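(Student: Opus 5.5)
The proof has two steps. First I show that subsumption by $D$ depends only on the first $i$ literals of a full-length clause. Then I identify, in the counter order, the full-length clauses that share those first $i$ literals with $\FLC(b)$.

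\emph{First part.} Since $i=\last(D)$, every literal of $D$ is over one of $x_1,\ldots,x_i$. Let $C'$ be a full-length clause whose literals over $x_1,\ldots,x_i$ coincide with those of $\FLC(b)$. Take any $\ell\in D$, and say $\ell$ is over $x_j$ with $j\le i$. From $D\subseteq\FLC(b)$ we get $\ell\in\FLC(b)$. A full-length clause contains exactly one literal over $x_j$, so $\ell$ is the literal of $\FLC(b)$ over $x_j$. That is also the literal of $C'$ over $x_j$, hence $\ell\in C'$. This gives $D\subseteq C'$. No property of $\FLC$ beyond ``one literal per variable'' is used here.

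\emph{Second part.} By the definition of $\FLC$, the full-length clauses agreeing with $\FLC(b)$ on the first $i$ literals are exactly $\FLC(b')$ for bit vectors $b'$ with $b'_j=b_j$ for all $j\le i$. Since $b_1$ is the most significant bit, these $b'$ are the integers $b'=P\cdot 2^{n-i}+s$, where $P$ is the integer encoded by $b_1\cdots b_i$ and $0\le s<2^{n-i}$. Together they form the consecutive interval $[P\cdot 2^{n-i},\,P\cdot 2^{n-i}+2^{n-i}-1]$.

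If $b_{i+1}=\cdots=b_n=0$, then $\mathit{count}=P\cdot 2^{n-i}$ is the left endpoint of this interval. So the current clause and the next $2^{n-i}-1$ counter values all lie in it, and by the first part each of them is subsumed by $D$. The edge case $i=n$ gives a block of size $1$ and is trivially consistent.

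The only point needing care is the correspondence between ``same first $i$ literals'' and ``same top $i$ bits'', which is what makes the block consecutive. I expect this to be the main (though mild) obstacle, and I would handle it explicitly with the integer decomposition above. The rest is routine.
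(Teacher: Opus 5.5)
Your proof is correct and takes essentially the same route as the paper's: all literals of $D$ lie over $x_1,\ldots,x_i$, so every full-length clause that agrees with $\FLC(b)$ on those positions contains $D$, and there are $2^{n-i}$ such suffixes. Your decomposition $P\cdot 2^{n-i}+s$ makes explicit that these clauses form a consecutive interval in the counter order, a step the paper's three-line proof leaves implicit.
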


\begin{proof}
All literals of $D$ have index at most $i$.
Any full-length clause with the same first $i$ literals as $\FLC(b)$ contains all literals of $D$.
There are $2^{n-i}$ possible suffixes after the first $i$ variables.
\end{proof}

\begin{algorithm}[H]
\caption{\textsc{CSFLOC}($S$)}
\label{alg:csfloc}
\begin{algorithmic}[1]
\Require $S$ is a non-empty set of ordered clauses over variables $x_1,\ldots,x_n$.
\Ensure If $S$ is satisfiable, return a model of $S$; otherwise return the empty set.
\State $S_i \gets \{D\in S\mid \last(D)=i\}$ for $i=1,\ldots,n$
\State $\mathit{count}\gets 0$
\While{$\mathit{count}<2^n$}
    \State $\mathit{increment}\gets 0$
    \State $C\gets \FLC(\mathit{count})$
    \For{$j\gets \lastpos(C)$ \textbf{to} $n$}
        \If{there exists $D\in S_j$ such that $D\subseteq C$}
            \State $\mathit{increment}\gets 2^{n-j}$
            \State \textbf{break}
        \EndIf
    \EndFor
    \If{$\mathit{increment}=0$}
        \State \Return $\overline{C}$
    \Else
        \State $\mathit{count}\gets \mathit{count}+\mathit{increment}$
    \EndIf
\EndWhile
\State \Return $\{\}$
\end{algorithmic}
\end{algorithm}

The only difference between Optimized CCC and CSFLOC is the starting point of the inner search.
Optimized CCC starts at $j=1$, while CSFLOC starts at $\lastpos(C)$.
For a fixed current clause $C$, a smaller value of $j$ yields a larger block of size $2^{n-j}$.
By the \emph{best available jump} we mean the jump supplied by an applicable subsuming clause with the smallest possible last-variable index.
Because Algorithm~\ref{alg:csfloc} scans $j$ in increasing order, it selects this largest available block.
The last-positive-bit observation states that, provided earlier iterations also selected their best available jumps, no clause with last variable below $\lastpos(C)$ can subsume the current full-length clause~\cite{KusperBiroIszaly2018CSFLOC,Kusper2020Habil}.
A subsuming clause \emph{justifies} its jump because Lemma~\ref{lem:block} guarantees that every full-length clause skipped before the next block boundary is also subsumed; therefore the jump cannot skip a satisfying assignment.
Algorithm~\ref{alg:csfloc} is consequently sound and complete: if no subsuming clause is found, Lemma~\ref{lem:clear} gives a model; if a subsuming clause is found, the block lemma validates the movement; and if the counter reaches $2^n$, Corollary~\ref{cor:coverage} gives unsatisfiability.

\section{Watched-Literal Prefix Propagation}
\label{sec:theory}

The purpose of CSFLOC-WL is to replace the explicit subsumption search in Algorithm~\ref{alg:csfloc} by watched-literal prefix propagation under $\alpha(b)=\overline{\FLC(b)}$.
The global structure remains the CSFLOC counter loop.
The solver still traverses the ordered space of full-length clauses, and the only UNSAT return is the classical CSFLOC one: the counter reaches the end of the full-length-clause space.
The difference is how the next counter increment is found.

\subsection{Terminology of Watched-Literal Prefix Propagation}
For $i\in\{1,\ldots,n\}$, let $\alpha_{<i}(b)$ denote the
restriction of $\alpha(b)$ to the variables
$x_1,\ldots,x_{i-1}$.
Assignments taken directly from this restriction are called
\emph{prefix decisions}.
In an explanation clause, see the next paragraph, a \emph{prefix-decision literal} is a
literal falsified by one of these assignments rather than by a
propagated assignment.

A partial assignment assigns truth values to some of the variables.
A clause $D$ is \emph{unit under a partial assignment $A$ with unit
literal $u$} if $u\in D$ is unassigned and every literal in
$D\setminus\{u\}$ is false under $A$.
Unit propagation then assigns $u$ to true; $u$ is called a
\emph{propagated literal}, and $D$ is its \emph{reason clause}.
A \emph{propagation conflict} occurs when every literal of some
clause is false.
For the current full-length clause $C$, a propagated literal $u$ is
\emph{relevant} if $u\in C$, that is, if the counter-complement
assignment $\overline C$ would make $u$ false.
An \emph{explanation clause} for a propagated literal or a conflict
is obtained by resolving reason clauses until only 
the relevant propagated literal, if any, and prefix-decision literals remain.

The \emph{watched-literal data structure for $S$} associates each
clause with one or two watched literals and maintains, for each
literal, a list of clauses that currently watch it.
When a watched literal becomes false, only the clauses on its watch
list need to be inspected.
A clause then moves the watch to another non-false literal if
possible; otherwise it is already satisfied, becomes unit and
propagates its remaining literal, or becomes conflicting.

\begin{definition}[Prefix-valid cause]
\label{def:prefix-valid}
Let $b$ be the current counter value and let $D$ be a non-empty
clause with $S\models D$.
Let $i=\last(D)$ and let $p_i(D)$ be the literal of $D$ over $x_i$.
We say that $D$ is \emph{prefix-valid at level $i$ for $b$} if
\[
  D\setminus\{p_i(D)\}\subseteq\FLC(b).
\]
The literal $p_i(D)$ is called the pivot literal of $D$ at level $i$.
\end{definition}

A prefix-valid cause becomes an ordinary counter-jump cause exactly
when its pivot literal also belongs to $\FLC(b)$.
If the pivot literal does not belong to $\FLC(b)$, the clause is still
useful as a reason for a forced value opposite to the current counter
branch.
Two prefix-valid causes at level $i$ are called \emph{opposite} if
their pivot literals are $x_i$ and $\bar{x}_i$, respectively.

Here level $j$ refers to the position of variable $x_j$ in the fixed
variable order.
For $B=2^{n-j}$ and
$q=\lfloor\mathit{count}/B\rfloor$, the \emph{level-$j$ block}
containing $\mathit{count}$ is the interval
\[
  \{qB,\ldots,(q+1)B-1\}.
\]
Its counter values have the same first $j$ bits.
The \emph{next boundary} of this block is $(q+1)B$, so the distance
from the current counter value to that boundary is
$B-(\mathit{count}\bmod B)$.


\subsection{The CSFLOC SAT Solver with Watched-Literals Data Structure}

\begin{algorithm}[H]
\caption{\textsc{CSFLOC-WL}($S$)}
\label{alg:csfloc-wl}
\begin{algorithmic}[1]
\Require $S$ is a non-empty set of ordered clauses over variables $x_1,\ldots,x_n$.
\Ensure If $S$ is satisfiable, return a model of $S$; otherwise return the empty set.
\State $\mathit{count}\gets 0$
\State initialize the watched-literal data structure for $S$
\While{$\mathit{count}<2^n$}
    \State $\mathit{increment} \gets 0$
    \State $C \gets \FLC(\mathit{count})$
    \State $\mathit{start}\gets \lastpos(C)$
    \State propagate $\overline{C}$ by watched literals, using $\mathit{start}$ as the first ordinary CSFLOC search level, until one of the following events occurs:
    \Statex \hspace{\algorithmicindent}(a) a relevant unit $u\in C$ is derived, with explanation clause $D$;
    \Statex \hspace{\algorithmicindent}(b) a conflict is derived, with explanation clause $D$;
    \Statex \hspace{\algorithmicindent}(c) opposite prefix-valid causes $P$ and $N$ yield a non-tautological resolvent $D$;
    \Statex \hspace{\algorithmicindent}(d) propagation finishes without a relevant unit, conflict, or non-tautological complementary resolvent.
    \If{event (d) occurs}
        \State \Return $\overline{C}$
    \EndIf
    \If{$D=\emptyset$}
        \State $\mathit{increment}\gets 2^n-\mathit{count}$
    \Else
        \State $j\gets \last(D)$
        \State $B\gets 2^{n-j}$
        \State $\mathit{increment}\gets B-(\mathit{count}\bmod B)$
    \EndIf
    \State $\mathit{count}\gets \mathit{count}+\mathit{increment}$
\EndWhile
\State \Return $\{\}$
\end{algorithmic}
\end{algorithm}

The operation ``propagate $\overline{C}$ by watched literals'' is intentionally abstract in Algorithm~\ref{alg:csfloc-wl}.
It denotes standard unit propagation under the counter-complement assignment.
The implementation details of watched lists, trails, queues, and reason arrays are not part of the abstract algorithm.
What matters for CSFLOC-WL is only the event returned by propagation.
A relevant unit is a propagated literal $u$ such that $u\in C$.
Equivalently, $u$ is a literal that the candidate assignment $\overline C$ would make false.
A conflict explanation is obtained from a propagation conflict by resolving reason clauses until only prefix-decision literals remain.
A complementary resolvent is obtained when two opposite prefix-valid causes meet at the same variable level.
Tautological complementary resolvents are ignored, because they exclude no full-length clause.

The formula $\mathit{increment}=B-(\mathit{count}\bmod B)$ moves the counter to the next boundary of the level-$j$ block.
In the classical CSFLOC case, where $j$ is not below the last positive bit, this is exactly the usual increment $2^{n-j}$.
The more general form is needed because early conflict detection may produce a cause whose last variable is below the current CSFLOC start index.
If $D=\emptyset$, the algorithm uses the final increment $2^n-\mathit{count}$; thus an empty derived clause is represented as a counter movement to the end of the full-length-clause space.


\begin{lemma}[Subsuming clause gives a prefix unit]
\label{lem:subsumption-unit}
Let $C=\FLC(b)$ be the current full-length clause, and let $D\subseteq C$ be a non-empty clause.
Let $i=\last(D)$, and let $u$ be the unique literal of $D$ over $x_i$.
Then $D$ is unit under $\alpha_{<i}(b)$ with unit literal $u$.
\end{lemma}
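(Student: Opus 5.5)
The plan is to unfold the definition of a unit clause under the partial assignment $\alpha_{<i}(b)$ and check its two conditions directly. Concretely, we show that $u$ is unassigned under $\alpha_{<i}(b)$ and that every literal of $D\setminus\{u\}$ is false under $\alpha_{<i}(b)$.

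First, we use the properties of $\last$ and of clauses as sets. Since $D\subseteq C$ and $C$ is full-length, $D$ contains at most one literal over each variable, so $D$ is not tautological. Since $i=\last(D)$, every literal of $D$ is over some $x_k$ with $k\le i$, and the literal over $x_i$ is exactly $u$. Hence every literal of $D\setminus\{u\}$ is over a variable $x_k$ with $k<i$. This uniqueness is also what makes ``the unique literal $u$'' well defined.

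Second, we establish falsity of the other literals. Take $\ell\in D\setminus\{u\}$ over $x_k$ with $k<i$. Because $\ell\in D\subseteq C=\FLC(b)$, the equivalence recalled in Section~\ref{sec:prelim} ($\ell\in\FLC(b)$ iff $\ell$ is false under $\alpha(b)$) shows that $\ell$ is false under $\alpha(b)$. Since $k<i$, the variable $x_k$ lies in the domain of the restriction $\alpha_{<i}(b)$, and $\alpha_{<i}(b)$ agrees with $\alpha(b)$ there. So $\ell$ is false under $\alpha_{<i}(b)$.

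Third, $u$ is over $x_i$, and $\alpha_{<i}(b)$ assigns only $x_1,\ldots,x_{i-1}$, so $u$ is unassigned. Both conditions of the definition then hold, and $D$ is unit with unit literal $u$.

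The only delicate point is the notion of ``unit''. It is stated relative to the prefix decisions alone, without propagated assignments, and the lemma is phrased exactly so. We note this explicitly, so that propagated values of $x_i$ or of earlier variables do not enter. There is no real obstacle beyond this bookkeeping. Note also that $D\setminus\{u\}$ may be empty when $D=\{u\}$; the falsity condition then holds vacuously.
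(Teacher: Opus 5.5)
Your proof is correct and follows the paper's argument: literals of $D\setminus\{u\}$ have index below $i$ and lie in $\FLC(b)$, so they are false under $\alpha_{<i}(b)$, while $u$ over $x_i$ is unassigned. Your extra remarks (uniqueness of $u$ because $D$ sits inside a full-length clause, and the vacuous case $D=\{u\}$) are valid refinements of the same route.
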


\begin{proof}
Every literal of $D\setminus\{u\}$ has index smaller than $i$ and belongs to $\FLC(b)$.
Hence it is false under $\alpha_{<i}(b)$.
The literal $u$ is over $x_i$, so it is not assigned by $\alpha_{<i}(b)$.
\end{proof}

\begin{lemma}[Relevant prefix unit gives a jump cause]
\label{lem:unit-jump}
Let $D$ be a clause entailed by $S$.
Assume that $D$ is unit under $\alpha_{<i}(b)$ with unit literal $u$, where $u$ is over $x_i$ and $i=\last(D)$.
If $u\in\FLC(b)$, then $D\subseteq\FLC(b)$, and therefore $D$ is a valid counter-jump cause for $b$.
\end{lemma}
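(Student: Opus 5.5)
The plan is to show directly that every literal of $D$ lies in $\FLC(b)$, splitting $D$ into the unit literal $u$ and the remaining literals $D\setminus\{u\}$. Entailment $S\models D$ is given by hypothesis, so once $D\subseteq\FLC(b)$ is established, both conditions in the definition of a counter-jump cause hold. Nothing beyond unfolding the definitions of ``unit under a partial assignment'', $\alpha_{<i}(b)$ and $\FLC(b)$ should be needed.

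\emph{Step 1: the literals other than $u$.} Since $D$ is unit under $\alpha_{<i}(b)$ with unit literal $u$, every literal $\ell\in D\setminus\{u\}$ is false under $\alpha_{<i}(b)$. The assignment $\alpha_{<i}(b)$ only assigns $x_1,\ldots,x_{i-1}$, so $\ell$ is over some $x_k$ with $k<i$. Moreover $\alpha_{<i}(b)$ agrees with $\alpha(b)$ on that variable, so $\ell$ is false under $\alpha(b)=\overline{\FLC(b)}$. By the characterization recorded in Section~\ref{sec:prelim} ($\ell\in\FLC(b)$ iff $\ell$ is false under $\alpha(b)$), we get $\ell\in\FLC(b)$. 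Note that $i=\last(D)$ and $u$ being the literal over $x_i$ are consistent with this: a clause cannot contain two literals over $x_i$ unless it is tautological, and even then the unit condition forces every other literal to be assigned.

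\emph{Step 2: the unit literal.} Here we simply invoke the hypothesis $u\in\FLC(b)$. Together with Step~1 this gives $D=(D\setminus\{u\})\cup\{u\}\subseteq\FLC(b)$.

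\emph{Step 3: conclusion.} By the Definition of subsumption and jump cause, $S\models D$ and $D\subseteq\FLC(b)$ make $D$ a counter-jump cause for $b$. By Lemma~\ref{lem:block}, $D$ then soundly justifies the jump to the next level-$i$ block boundary.

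I expect no real obstacle. The only point needing care is Step~1: it relies on the fact that falsity under the restriction $\alpha_{<i}(b)$ implies falsity under the full assignment $\alpha(b)$, because the restriction agrees with $\alpha(b)$ on every variable it assigns.
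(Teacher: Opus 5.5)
Your proof is correct and follows essentially the same argument as the paper: the non-unit literals are false under $\alpha_{<i}(b)$ and hence lie in $\FLC(b)$, $u\in\FLC(b)$ holds by hypothesis, and $S\models D$ then makes $D$ a counter-jump cause. Your extra remarks (that $\alpha_{<i}(b)$ agrees with $\alpha(b)$, the tautology aside, and the appeal to Lemma~\ref{lem:block}) are harmless elaborations that the statement does not require.
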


\begin{proof}
All non-unit literals of $D$ are false under $\alpha_{<i}(b)$, hence belong to $\FLC(b)$.
The unit literal $u$ belongs to $\FLC(b)$ by assumption.
Thus $D\subseteq\FLC(b)$, and the entailment $S\models D$ makes $D$ a counter-jump cause.
\end{proof}

Explanation clauses used in events (a) and (b) are obtained from reason clauses by repeated resolution.
Reason clauses are original input clauses or clauses previously derived by sound resolution steps.
Consequently, every non-empty explanation clause returned by event (a) or (b) is entailed by $S$ and is a subset of $\FLC(b)$: in the relevant-unit case it contains the relevant unit and prefix-decision literals, and in the conflict case it contains only prefix-decision literals.
Hence every non-empty explanation is a valid counter-jump cause.
If the explanation is empty, Algorithm~\ref{alg:csfloc-wl} represents it by the final increment $2^n-\mathit{count}$.



\begin{theorem}[Early Conflict Detection]
\label{thm:early-conflict-detection}
Let $S$ be a CNF formula and let $b$ be the current CSFLOC counter value.
Let $i\in\{1,\ldots,n\}$.
Assume that $P$ and $N$ are two prefix-valid causes at level $i$ for $b$ such that $x_i\in P$ and $\bar{x}_i\in N$.
Define
\[
  D=\Res_i(N,P)=(N\cup P)\setminus\{\bar{x}_i,x_i\},
\]
with duplicate literals removed.
If $D$ is tautological, it is ignored.
If $D$ is empty, CSFLOC-WL represents it as the final increment to $2^n$.
If $D$ is non-empty and non-tautological, then $S\models D$, $D\subseteq\FLC(b)$, and $\last(D)<i$; hence $D$ is a valid counter-jump cause.
\end{theorem}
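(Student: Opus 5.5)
The proof reduces to three claims about the non-empty, non-tautological case: entailment, inclusion in $\FLC(b)$, and $\last(D)<i$. The tautological and empty cases need no argument beyond the conventions of Algorithm~\ref{alg:csfloc-wl}.

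\textbf{Entailment.} I would use soundness of resolution. By Definition~\ref{def:prefix-valid} we have $S\models P$ and $S\models N$. Take any assignment satisfying $S$. If it makes $x_i$ true, then $N$ must be satisfied by a literal other than $\bar{x}_i$. If it makes $x_i$ false, then $P$ must be satisfied by a literal other than $x_i$. In both cases some literal of $(N\cup P)\setminus\{\bar{x}_i,x_i\}=D$ is true, so $S\models D$.

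\textbf{Structure of $P$ and $N$.} Since $x_i\in P$ and $\bar{x}_i\in N$, and both are prefix-valid at level $i$, we have $\last(P)=\last(N)=i$ with pivot literals $p_i(P)=x_i$ and $p_i(N)=\bar{x}_i$. One point needs care here. A clause containing both $x_i$ and $\bar{x}_i$ would make the pivot ill-defined, so I would either read $p_i$ as the unique literal over $x_i$, or note that such a clause is tautological and excluded. Under that reading, $P\setminus\{x_i\}$ and $N\setminus\{\bar{x}_i\}$ contain only literals over variables of index below $i$.

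\textbf{Inclusion and level.} Prefix-validity gives $P\setminus\{x_i\}\subseteq\FLC(b)$ and $N\setminus\{\bar{x}_i\}\subseteq\FLC(b)$. Since $D$ is exactly the union of these two sets, $D\subseteq\FLC(b)$. Every literal of $D$ has index less than $i$, so for non-empty $D$ we get $\last(D)<i$.

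A further remark follows from the inclusion. Because $D\subseteq\FLC(b)$ and $\FLC(b)$ is a full-length clause containing exactly one literal per variable, $D$ is in fact automatically non-tautological. So the tautology clause in the statement is vacuous under these hypotheses, which I would mention explicitly.

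\textbf{Conclusion.} Combining $S\models D$ with $D\subseteq\FLC(b)$ makes $D$ a counter-jump cause in the sense of the definition of subsumption and jump cause. Lemma~\ref{lem:block} then justifies the jump to the next boundary of the level-$\last(D)$ block. I do not expect a real obstacle; the only delicate step is the pivot/uniqueness point above.
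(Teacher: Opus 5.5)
Your proposal is correct and matches the paper's proof: soundness of resolution gives $S\models D$, prefix validity of the two non-pivot parts gives $D\subseteq\FLC(b)$, and the fact that those parts lie below level $i$ gives $\last(D)<i$. Your two extra observations are also correct and go slightly beyond the paper: $D\subseteq\FLC(b)$ makes the tautological case vacuous, and the unique-pivot reading is genuinely needed, chiefly for the entailment step rather than for inclusion or level (which hold even without it), because $\Res_i$ as defined deletes both $x_i$ and $\bar{x}_i$ from $N\cup P$, so a tautological $N$ containing $x_i$ could break $S\models D$.
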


\begin{proof}
Since $P$ and $N$ are prefix-valid causes, $S\models P$ and $S\models N$.
By soundness of resolution, $S\models D$ whenever the resolvent is considered as a non-tautological clause.
If $D$ is tautological, it excludes no full-length clause and is ignored.
If $D$ is empty, it is contained in every full-length clause and is represented by the final increment.
Now assume that $D$ is non-empty and non-tautological.
Prefix validity gives $P\setminus\{x_i\}\subseteq\FLC(b)$ and $N\setminus\{\bar{x}_i\}\subseteq\FLC(b)$.
The resolvent contains only literals from these two non-pivot parts, so $D\subseteq\FLC(b)$.
All non-pivot literals have index strictly smaller than $i$, hence $\last(D)<i$.
\end{proof}

\begin{corollary}[Early jump]
\label{cor:early-jump}
In the non-empty non-tautological case of Theorem~\ref{thm:early-conflict-detection}, CSFLOC-WL may jump according to $\last(D)$.
Since $\last(D)<i$, this jump is earlier than the level at which the opposite unit consequences meet.
If $\last(D)$ is also below the current CSFLOC start index, then the jump skips a larger block than the ordinary last-positive-bit search would consider in that iteration.
\end{corollary}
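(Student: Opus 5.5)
The statement to prove is Corollary~\ref{cor:early-jump}. It is a direct consequence of Theorem~\ref{thm:early-conflict-detection} combined with the increment rule of Algorithm~\ref{alg:csfloc-wl}. I would carry out three steps in order.

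\textbf{Step 1: the jump is sound.} In the non-empty non-tautological case, Theorem~\ref{thm:early-conflict-detection} gives $S\models D$, $D\subseteq\FLC(b)$ and $j:=\last(D)<i$. So $D$ is a counter-jump cause for $b$. Algorithm~\ref{alg:csfloc-wl} sets $B=2^{n-j}$ and moves the counter to the next boundary $(q+1)B$ of the level-$j$ block, where $q=\lfloor \mathit{count}/B\rfloor$. Every counter value in $\{\mathit{count},\ldots,(q+1)B-1\}$ shares its first $j$ bits with $b$, so its full-length clause has the same first $j$ literals as $\FLC(b)$. By Lemma~\ref{lem:block}, each of these full-length clauses is subsumed by $D$. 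Since $S\models D$, Lemma~\ref{lem:clear} applied to $S\cup\{D\}$ shows that no skipped clause has a complement that is a model of $S$. This is the one point needing care: $D$ is entailed by $S$ rather than a member of $S$, and adding an entailed clause does not change the models of $S$. Hence the jump skips no satisfying assignment.

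\textbf{Step 2: the jump is earlier.} Because $j<i$, the jump is determined at a level strictly before level $i$, where the opposite units $x_i$ and $\bar x_i$ meet. The level-$j$ block contains the level-$i$ block around $\mathit{count}$, since $B=2^{n-j}$ is a multiple of $2^{n-i}$. Therefore the new boundary is at least the level-$i$ boundary.

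\textbf{Step 3: comparison with the ordinary search.} Suppose in addition that $j<\mathit{start}=\lastpos(C)$. Algorithm~\ref{alg:csfloc} only considers levels $j'\ge\lastpos(C)$. For such levels the bits $b_{j'+1},\ldots,b_n$ are zero, so the ordinary increment is $2^{n-j'}\le 2^{n-\lastpos(C)}$. Our increment is $B-(\mathit{count}\bmod B)$. Bits $b_{j+1},\ldots,b_n$ contain the 1 at position $\lastpos(C)$, unless $\lastpos(C)=1$ by the default, in which case $j<1$ is impossible. The remainder $\mathit{count}\bmod B$ has its top set bit at a position $\ge\lastpos(C)>j$ and vanishes after $\lastpos(C)$. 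Hence the remainder is at most $B-2^{n-\lastpos(C)}$, and so the increment is at least $2^{n-\lastpos(C)}$, with strict inequality unless the bits strictly between $j$ and $\lastpos(C)$ are all 1.

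\textbf{Main obstacle.} The hard part is making ``larger block'' precise. The clean claim is that the skipped block, the level-$j$ block, strictly contains every level-$j'$ block with $j'\ge\lastpos(C)$ that the ordinary search would skip. I would state it in that block-containment form rather than as a strict comparison of increments, which can be equal in the edge case above.
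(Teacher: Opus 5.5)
Your proposal is correct and matches the paper's intended reasoning. The paper gives no separate proof: it treats the corollary as immediate from Theorem~\ref{thm:early-conflict-detection}, Lemma~\ref{lem:block}, and the increment rule $B-(\mathit{count}\bmod B)$, which is exactly what your Steps~1 and~2 spell out.

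Your Step~3 goes beyond the paper, and it is right. The clean way to say it is that $\mathit{count}\bmod 2^{n-j}$ is a multiple of $2^{n-\lastpos(C)}$ lying strictly below $2^{n-j}$. (Your phrase about the ``top set bit'' being at a position $\ge\lastpos(C)$ has the inequality backwards; it is the lowest-order set bit that sits at $\lastpos(C)$.) Hence the increment is at least $2^{n-\lastpos(C)}$ but can tie with it. In fact it always ties when $\last(D)=\lastpos(C)-1$.

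So the paper's phrase ``skips a larger block'' is true only in your block-containment sense, not as a strict comparison of counter increments. Stating the claim that way is the correct repair.
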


\begin{theorem}[Soundness and completeness of CSFLOC-WL]
\label{thm:wl-sound-complete}
Assume that watched-literal propagation in Algorithm~\ref{alg:csfloc-wl} implements standard unit propagation under the current prefix assignment, and that explanation clauses are built by sound resolution steps.
Then Algorithm~\ref{alg:csfloc-wl} is sound and complete for propositional satisfiability.
\end{theorem}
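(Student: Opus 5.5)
The proof splits into three parts: soundness of the SAT answer, soundness of the UNSAT answer, and termination. The last two together give completeness. We maintain the invariant
\[
  (\mathrm{I})\quad \text{every full-length clause } \FLC(c) \text{ with } 0\le c<\mathit{count} \text{ is subsumed by } S,
\]
and prove it by induction over the iterations of the while loop. Initially $\mathit{count}=0$, so (I) holds vacuously.

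\textbf{Soundness of SAT.} Suppose event (d) occurs, so propagation under $\overline C$ produced no relevant unit and no conflict. By Lemma~\ref{lem:clear} it suffices to show that $S$ does not subsume $C$. Assume instead that some $E\in S$ satisfies $E\subseteq C$, and let $i=\last(E)$. By Lemma~\ref{lem:subsumption-unit}, $E$ is unit under $\alpha_{<i}(b)$ with unit literal $u\in C$. There are two cases:
\begin{itemize}
\item If $x_i$ is still unassigned when the prefix reaches level $i$, standard unit propagation would propagate $u$, which is a relevant unit. This would be event (a).
\item If $x_i$ was already propagated to the value opposite to $u$, then $E$ is fully falsified, which would be event (b).
\end{itemize}
Either case contradicts the assumption that (d) occurred. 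Two points need care here. First, propagation must be exhaustive at each prefix level, which is exactly the hypothesis that it ``implements standard unit propagation under the current prefix assignment.'' Second, we should check that $E$ is not missed when $i<\mathit{start}$. This follows from (I) together with the last-positive-bit argument: all counter values with the same first $\lastpos(C)-1$ bits as $b$ but smaller are below $\mathit{count}$. If needed, we simply note that propagation runs over the whole prefix, and $\mathit{start}$ only affects which events are reported first, not whether conflicts are detected.

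\textbf{Preservation of (I).} Consider an iteration that increments the counter, and let $D$ be the cause. By Lemma~\ref{lem:unit-jump}, the paragraph after it (events (a) and (b)), and Theorem~\ref{thm:early-conflict-detection} (event (c)), $D$ satisfies $S\models D$ and $D\subseteq\FLC(b)$.

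\emph{Case $D$ non-empty.} Let $j=\last(D)$. Every counter value in $[\mathit{count},\mathit{count}+\mathit{increment})$ lies in the level-$j$ block of $\mathit{count}$, so it shares its first $j$ bits with $b$. By Lemma~\ref{lem:block}, each such $\FLC(c)$ contains $D$.

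A subtle point remains, and I expect it to be the main obstacle: $D$ is entailed by $S$ but need not belong to $S$. So we must strengthen (I) to ``$\overline{\FLC(c)}$ is not a model of $S$,'' which by Lemma~\ref{lem:clear} is equivalent to the original statement. Since $S\models D$ and $D\subseteq \FLC(c)$ makes $D$ false under $\overline{\FLC(c)}$, this assignment cannot satisfy $S$, and the strengthened invariant is preserved.

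\emph{Case $D=\emptyset$.} Then $S\models\emptyset$, so $S$ is unsatisfiable. By Lemma~\ref{lem:clear}, every full-length clause is subsumed, and the jump to $2^n$ preserves (I).

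\textbf{Termination and conclusion.} Every increment is at least $1$: for non-empty $D$ we have $B-(\mathit{count}\bmod B)\ge 1$, and for empty $D$ we have $2^n-\mathit{count}\ge1$. Hence the loop terminates after at most $2^n$ iterations. There are two ways to exit:
\begin{itemize}
\item By returning $\overline C$, which is a model by the SAT-soundness argument above.
\item With $\mathit{count}=2^n$. Then (I) together with Corollary~\ref{cor:coverage} gives unsatisfiability, so returning $\{\}$ is correct.
\end{itemize}
Conversely, if $S$ is satisfiable, some $\FLC(c)$ is not subsumed, and by (I) the counter can never pass $c$. So the algorithm must return a model, which establishes completeness.
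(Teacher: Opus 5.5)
Your proof is correct and follows the same route as the paper's: each increment is justified by an entailed clause contained in the current full-length clause, event (d) is excluded via Lemma~\ref{lem:subsumption-unit} and Lemma~\ref{lem:clear}, and termination follows from positive increments, with your invariant (I) and the observation that $S\models D$ and $D\subseteq\FLC(c)$ already make $\overline{\FLC(c)}$ a non-model spelling out what the paper compresses into ``no counter increment skips a satisfying assignment.'' One caution: the appeal to (I) plus the last-positive-bit argument for $i<\mathit{start}$ does not go through, because that argument needs every earlier iteration to have taken its best available jump, which CSFLOC-WL does not guarantee (an event (a) for a later-indexed relevant unit can fire first), so rely only on your fallback that propagation covers the whole prefix, which is what the paper's proof tacitly assumes; also add the trivial third case where $u$ was already propagated true, which is again event (a).
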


\begin{proof}
Every non-zero increment is justified by a clause entailed by $S$ and contained in the current full-length clause $C$, or by the empty clause represented as the final increment.
For event (a), this follows from Lemma~\ref{lem:unit-jump}; for event (b), from the explanation invariant; for event (c), from Theorem~\ref{thm:early-conflict-detection}.
Thus no counter increment skips a satisfying assignment.
If propagation finishes with event (d), then no input clause subsumes $C$: otherwise Lemma~\ref{lem:subsumption-unit} would make such a clause appear as a relevant unit, or propagation would derive a conflict.
By Lemma~\ref{lem:clear}, $\overline C$ is a model of $S$.
The only UNSAT return occurs after the loop, when the counter has covered the full-length-clause space by sound increments; Corollary~\ref{cor:coverage} then gives unsatisfiability.
Termination follows because every non-returning iteration increases the monotone counter by a positive amount.
\end{proof}

CSFLOC-WL uses watched literals, unit reasons, and resolution, but it does not use the CDCL search architecture.
A CDCL solver would interpret an implication conflict as a reason to learn a clause and backjump in a decision tree.
CSFLOC-WL uses the same local propagation information to move a monotone full-length-clause counter.
The learned or derived object is therefore a CSFLOC jump cause, not primarily a first-UIP backjump clause.

\section{Implementation Notes}
\label{sec:impl}

The implementation evaluated in this paper is the Java prototype CSFLOC-WL3 available at \url{https://uni-eszterhazy.hu/fmv/tools}.
It has four layers: input normalization and variable renaming; ordered clause and counter representation; watched-literal prefix propagation; and the CSFLOC control loop.
The input is read in DIMACS CNF format.
After parsing, duplicate literals are removed, tautological clauses are ignored, and clauses are sorted according to the selected variable order.
The full-length-clause counter is represented as a bit vector with the same semantics as in Section~\ref{sec:prelim}.

The propagation layer uses the standard two-watched-literal invariant.
A clause is inspected only when one of its watched literals becomes false.
If a replacement watched literal can be found, the clause is not processed further.
If not, the other watched literal either becomes a propagated unit or yields a conflict.
The solver stores a propagation trail and reason clauses for propagated variables.
When the prefix state is reset for the next counter value, assignments and reasons are cleared, but the watched positions are not restored; unassigning variables cannot invalidate the watch invariant.

The main loop is a hybrid.
It first tries known active causes; on a miss, it runs watched-literal prefix propagation and converts relevant units, conflicts, and complementary causes into counter-jump causes.
CSFLOC-WL3 stores the active positive and negative causes needed by the traversal and by early conflict detection, but it does not yet have the mature CSFLOC21TU learned-cause cache.
This design decision is important for the experiments: some regressions are expected when a solver that discovers good causes by propagation is compared with a solver that already has an efficient mechanism for remembering and reusing them.

\subsection{Variable-renaming strategies T and U}

CSFLOC21TU, available at \url{https://uni-eszterhazy.hu/fmv/tools}, and also CSFLOC-WL3 contain two additional variable-renaming strategies, denoted by $T$ and $U$.
They belong to the same High\-Level\-Reader-style preprocessing mechanism as the earlier $B,H,C,W,R,I,$ and $S$ strategies known from CSFLOC19.
A strategy string is applied from left to right: after each strategy a translation table is computed, the clauses are renamed, and the new translation is composed with the previous translations.
Thus a label such as CUB denotes a sequential composition of clustering, island-oriented renaming, and unit-oriented renaming.
The strategies change only the variable order; they do not change the formula.

The $T$ strategy is a two-SAT-prefix-oriented renaming strategy.
Its goal is to give small indices to a greedy prefix of variables such that, after this prefix has been fixed, long clauses are reduced to clauses of length at most two on as many counter branches as possible.
For 3-SAT this becomes a hitting-set-like task: the selected prefix should intersect every 3-clause.
The strategy uses a waiting-list discipline: after a variable has been selected, related variables from still-uncovered long clauses are temporarily delayed, so the prefix spreads over many clauses instead of concentrating in one local region.
At the moment CSFLOC has no dedicated 2-SAT watcher or integrated 2-SAT solver, so $T$ is mainly preparation for a future extension.

The $U$ strategy uses the same waiting-list philosophy, but its target is directly connected to propagation.
It tries to arrange the variable order so that unit clauses appear early and frequently under counter-induced prefixes.
For a clause of length $k$, it tries to place $k-1$ useful variables early, so that the remaining literal can become unit.
This is especially natural for CSFLOC-WL, where jump causes are discovered through prefix units and early conflicts.
In the reported experiments, the CUB strategy is used on the two ssa instances; these rows are not an ablation study of $U$, but they show that a combination containing $U$ was among the selected strong CSFLOC configurations.

\section{Experimental Evaluation}
\label{sec:experiments}

The experiments compare three solvers.
CSFLOC21TU is the previous CSFLOC implementation extended with the new $T$ and $U$ variable-renaming strategies and with the mature cache mechanism inherited from CSFLOC21.
CSFLOC-WL3 is the new watched-literal-based implementation with early conflict detection.
All CSFLOC variants are available at \url{https://uni-eszterhazy.hu/fmv/tools}.
CaDiCaL 3.0.0 is used as a modern CDCL reference solver.
All test instances are UNSAT and were downloaded from SATLIB in DIMACS CNF format.
The selected instances cover several SATLIB families: uuf random unsatisfiable 3-SAT, pigeonhole, Dubois, pret, ssa, bf, and AIM instances, see: \url{https://www.cs.ubc.ca/~hoos/SATLIB/benchm.html}. 
The goal is not a complete benchmark study over all instances of each family, but an initial engineering comparison.

The experiments were executed on a laptop with a 12th Gen Intel(R) Core(TM) i7-1255U CPU at 1.70 GHz and 32 GB of memory.
Each selected instance was executed three times for each solver, and the table reports the arithmetic mean of the three wall-clock running times in seconds.
The notation $\TO$ means that the run did not finish within the timeout limit of 120 seconds.
Because several measurements are below one tenth of a second, very small differences should not be over-interpreted.

\begin{table}[t]
\centering
\scriptsize
\setlength{\tabcolsep}{3pt}
\begin{tabular}{lrrlrrrrr}
\toprule
Instance & $n$  & $m$ & Strategy & C19 & C21TU & C21NoCache & CWL3 & CaDi \\
\midrule
uuf50-01        & 50   & 218  & IWCR &  0.017 &  0.031 & 0.018 & 0.004 & 0.03 \\
uuf75-01        & 75   & 325  & IWCR &  0.053 &  0.054 & 0.061 & 0.012 & 0.05 \\
uuf100-01       & 100  & 430  & IWCR &  0.135 &  0.146 & 0.191 & 0.025 & 0.05 \\
uuf125-01       & 125  & 538  & IWCR &  1.099 &  0.436 & 1.746 & 0.071 & 0.09 \\
uuf150-01       & 150  & 645  & IWCR & 52.156 & 23.929 & $\TO$ & 1.179 & 0.14 \\
uuf175-01       & 175  & 753  & CIBR &  $\TO$ &  $\TO$ & $\TO$ & 6.856 & 0.31 \\
uuf200-01       & 200  & 860  & CIBR &  $\TO$ &  $\TO$ & $\TO$ & 24.825 & 0.79 \\
hole7           & 56   & 204  & B    &  0.053 &  0.063 & 0.087 & 0.289 & 0.03 \\
hole8           & 72   & 297  & B    &  0.296 &  0.208 & 0.647 & 3.812 & 0.07 \\
hole9           & 90   & 415  & B    &  3.333 &  2.236 & 6.885 & 65.269 & 0.05 \\
dubois30        & 90   & 240  & CIBR &  0.475 &  1.187 & 11.346 & 7.009 & 0.02 \\
pret60\_75      & 60   & 160  & NON  &  0.052 &  0.071 & 0.058 & 0.105 & 0.05 \\
ssa0432-003     & 435  & 1027 & CUB  &  0.227 &  0.242 & 3.483 & 0.372 & 0.04 \\
ssa2670-141     & 986  & 2315 & CUB  &  0.791 &  0.575 & 31.054 & 4.379 & 0.06 \\
bf2670-001      & 1393 & 3434 & IWCR &  0.082 &  0.093 & 0.083 & 0.043 & 0.07 \\
bf0432-007      & 1040 & 3668 & IWCR & 79.435 & 87.233 & $\TO$ & $\TO$ & 0.06 \\
aim-200-2\_0-no-1 & 200 & 400 & IWCR &  0.015 &  0.043 & 0.015 & 0.042 & 0.05 \\
aim-200-2\_0-no-2 & 200 & 400 & IWCR &  0.018 &  0.038 & 0.021 & 0.034 & 0.05 \\
\bottomrule
\end{tabular}
\caption{Runtime comparison on selected UNSAT SATLIB instances. Times are arithmetic means of three runs, in seconds. C19, C21TU, CWL3, C21NoCache, and CaDi abbreviate CSFLOC19, CSFLOC21TU, CSFLOC21TU with no cache, CSFLOC-WL3, and CaDiCaL 3.0.0, respectively.}
\label{tab:runtime}
\end{table}

To interpret Table~\ref{tab:runtime}, it is useful first to clarify the strategy labels and the main differences among the CSFLOC variants.
The strategy column records the variable-renaming strategy sequence used by the CSFLOC variants.
The same strategy is used for CSFLOC19, CSFLOC21TU, C21NoCache, and CSFLOC-WL3 in each row; CaDiCaL does not use this CSFLOC-specific parameter.
CSFLOC19 is regarded as the most stable CSFLOC version, so it serves as the baseline within the CSFLOC solver family.
Compared with CSFLOC19, CSFLOC21TU includes a new learned-cause mechanism~\cite{KusperSubmittedLLMCSFLOC} and the two new variable-renaming strategies, \(T\) and \(U\).
For each last-variable/sign pair, CSFLOC19 maintains a three-entry cache containing the three most recently learned clauses with that last literal.
CSFLOC21TU inherits the length-sensitive cache of CSFLOC21, which stores up to ten learned clauses for each last literal: three short clauses of length at most 5, three medium clauses of length 6--9, and four long clauses of length at least 10.
Apart from these extensions, CSFLOC21TU follows the same underlying CSFLOC architecture as CSFLOC19.

The rows using the CUB strategy are relevant to the new U strategy. Since CSFLOC19 does not implement U, it applies the corresponding strategy sequence without that component, namely CB.

The first observation is that CSFLOC-WL3 outperforms CSFLOC21TU without caching on most instances, with the main exceptions being the pigeonhole instances and a few instances with very short running times. This suggests that the watched-literal data structure is useful not only for CDCL-based SAT solvers, but also for CSFLOC-based solvers.

The first group of results is the random 3-SAT uuf family.
Here CSFLOC-WL3 is consistently faster than CSFLOC21TU on the reported instances.
On uuf150-01 the runtime decreases from 23.929 seconds to 1.179 seconds, and CSFLOC-WL3 solves uuf175-01 and uuf200-01 while CSFLOC21TU reaches the timeout limit.
This is the clearest evidence that watched-literal prefix propagation and early conflict detection can be useful for the CSFLOC family.

The second group shows the limitation of the current WL3 prototype.
On hole7, hole8, hole9, dubois30, pret60\_75, ssa2670-141, and bf0432-007, CSFLOC21TU is faster.
These results support the engineering hypothesis that the mature cache mechanism of CSFLOC21TU is still important, or even the least mature cache mechanism of CSFLOC19 is enough in most of the cases.
CSFLOC-WL3 has early conflict detection, but it does not yet have an equally developed cache for reusing derived causes across counter states.

The comparison with CaDiCaL should be interpreted as a reference point rather than as a claim of general competitiveness.
CaDiCaL is faster on most of the reported instances, particularly on the structured pigeonhole, Dubois, ssa, and bf0432 cases.
Nevertheless, CSFLOC-WL3 is faster than CaDiCaL on several selected cases, including uuf50-01, uuf75-01, uuf100-01, uuf125-01, and bf2670-001.
The two AIM rows also slightly favor CSFLOC-WL3, but the differences are too small to support a separate conclusion.
Thus, the data do not show that CSFLOC-WL3 competes with state-of-the-art CDCL solvers as a general-purpose SAT solver.
Rather, they show that the watched-literal CSFLOC direction is viable enough to justify further engineering.

\subsection{Internal statistics explaining the runtime differences}

Table~\ref{tab:internal-diagnostics} gives internal statistics for selected representative rows.
The most important quantity is the number of main-loop iterations.
Since CSFLOC is counter-guided, a large reduction in the number of main-loop iterations usually means that the solver traverses the full-length-clause space by larger jumps.
Conversely, if CSFLOC-WL3 performs more main-loop iterations than CSFLOC21TU, then the cost of watched-literal propagation and explanation construction is unlikely to be compensated.

\begin{table}[t]
\centering
\scriptsize
\setlength{\tabcolsep}{2.5pt}
\begin{tabular}{lrrrrrr}
\toprule
Instance & C21TU loops & CWL3 loops & Loop fac. & C21TU LH & CWL3 CH & CWL3 below \\
\midrule
uuf100-01    & 164\,004      & 1\,223       & 134.1 & 47\,549      & 852         & 381 \\
uuf150-01    & 46\,244\,244  & 322\,174     & 143.5 & 14\,838\,523 & 125\,857    & 70\,217 \\
uuf200-01    & $\TO$         & 4\,947\,939  & --    & $\TO$        & 2\,222\,147 & 1\,306\,067 \\
hole9        & 6\,733\,207   & 49\,816\,953 & 0.14  & 1\,466\,401  & 5\,214\,584 & 3\,439\,567 \\
dubois30     & 1\,218\,632   & 14\,105\,280 & 0.09  & 217\,918     & 1\,124\,287 & 562\,175 \\
ssa0432-003  & 689\,171      & 5 156\,235   & 4.41  & 81\,286      & 10\,184     & 4\,422 \\
ssa2670-141  & 2\,646\,742   &  1\,246\,814 & 2.12  & 205\,027     & 80\,405     &  37\,572  \\
bf2670-001   & 65\,990       & 1\,457       & 45.3  & 352          & 26          & 13 \\
bf0432-007   & 308\,005\,246 & $\TO$        & --    & 40\,375\,909 & $\TO$       & $\TO$ \\
\bottomrule
\end{tabular}
\caption{Internal diagnostic statistics. Loop fac. is C21TU loops divided by CWL3 loops. C21TU LH denotes CSFLOC21TU learned-clause hits. CWL3 CH denotes CSFLOC-WL3 Early-Conflict-Detection-theorem hits. CWL3 below denotes Early jumps below the current start index.}
\label{tab:internal-diagnostics}
\end{table}

The first pattern is visible on the random uuf instances.
On uuf100-01 and uuf150-01, CSFLOC-WL3 reduces the number of main-loop iterations by more than two orders of magnitude.
For example, uuf100-01 decreases from 164\,004 iterations in CSFLOC21TU to 1\,223 iterations in CSFLOC-WL3.
This reduction is accompanied by 852
Early-Conflict-Detection-theorem hits, including 381 
Early jumps below the current start index.
These numbers support the interpretation that CSFLOC-WL3 is fast on the uuf rows because prefix propagation and early conflict detection produce lower-level jump causes and therefore skip larger blocks of the full-length-clause space.

The second pattern explains the regressions.
On hole9, and dubois30 CSFLOC-WL3 performs substantially more main-loop iterations than CSFLOC21TU.
In these cases the watched-literal mechanism does not shorten the counter traversal enough.
Moreover, a WL3 iteration may include propagation, trail scanning, and explanation construction.
Thus, when the number of WL3 iterations is larger, the propagation overhead becomes visible.

The third pattern concerns caching.
CSFLOC21TU has a mature learned-cause cache, and the diagnostic table shows that this cache is used heavily on several structured instances.
For example, CSFLOC21TU has 1\,466\,401 learned-clause hits on hole9 and 40\,375\,909 learned-clause hits on bf0432-007.
The reported CSFLOC-WL3 runs, in contrast, use no learned-cause cache.
They rely on active causes, watched-literal prefix propagation, and complementary resolution.
This explains why WL3 can win when these mechanisms greatly shorten the traversal, but loses when CSFLOC21TU's learned-cause cache is more effective.

Overall, Tables~\ref{tab:runtime} and~\ref{tab:internal-diagnostics} suggest that CSFLOC-WL3 wins 
when watched-literal prefix propagation and complementary resolution reduce the number of counter states.
It loses when this reduction is absent, because propagation is more expensive than a successful cached-cause lookup.
The next engineering step is therefore clear: 
combine the WL3 propagation and early-conflict mechanisms with a CSFLOC21TU-level learned-cause cache.

\section{Conclusion}

This paper presented CSFLOC-WL and its current implementation CSFLOC-WL3, a non-CDCL SAT solver that replaces the classical CSFLOC subsumption search by watched-literal prefix propagation.
The theoretical bridge is simple: a clause subsuming the current full-length clause becomes a unit clause under the negated prefix assignment.
The main new rule is early conflict detection.
If opposite unit consequences are produced for the same variable under a common prefix, their reason clauses can be resolved immediately, and the resolvent is a sound CSFLOC counter-jump cause.
This uses a CDCL-style propagation event, but not the CDCL control architecture.
The global search remains a monotone traversal of the ordered full-length-clause space.

The experiments show both promise and a clear engineering gap.
CSFLOC-WL3 performs very well on several random 3-SAT instances and solves two uuf instances on which CSFLOC21TU reaches the timeout limit.
On several structured benchmarks, however, CSFLOC21TU is faster, apparently because its advanced cache mechanism reuses derived causes more effectively.
The current evidence supports a combined direction: add a CSFLOC21TU-style cache to the watched-literal implementation and then reevaluate the solver on a larger SATLIB subset.
The new $T$ and $U$ variable-renaming strategies are part of the current engineering picture.
Our broader experiments suggest that $U$ can be useful, especially in CUB-like combinations, while $T$ is mainly a preparation for future explicit 2-SAT handling.

\section*{Acknowledgment}
This project, i.e., Project no. 2024-1.2.5-TÉT-2024-00072 has been implemented with the support provided by the Ministry of Culture and Innovation of Hungary from the National Research, Development and Innovation Fund, financed under the 2024-1.2.5-TÉT funding scheme.

This work was also supported by the Slovak Research and Development Agency under the Contract no. SK-HU-24-0037.

\bibliographystyle{eptcs}
\bibliography{csfloc_wl3_final_v2.bib}
\end{document}